\documentclass[12pt,reqno]{amsart}
\usepackage[T1]{fontenc}
\usepackage[english]{babel}
\usepackage[utf8]{inputenc}
\usepackage[square,numbers]{natbib}
\usepackage{amssymb}
\usepackage{amsmath}
\usepackage{xcolor}
\usepackage{amsfonts}
\usepackage{amsthm}
\usepackage{csquotes}
\usepackage[a4paper,margin=2.5cm]{geometry}
\usepackage{graphicx} 
\usepackage{comment}
\usepackage{enumitem}


\usepackage[colorlinks]{hyperref}
\hypersetup{
 colorlinks = true,
 linkcolor = blue,
}
\setcitestyle{numbers}
\theoremstyle{plain}

\newcommand{\esp}{\operatorname{\mathbb{E}}}

\newcommand{\var}{\operatorname{var}}
\newcommand{\cov}{\operatorname{cov}}
\newcommand{\norm}[1]{\|#1\|}
\newcommand{\real}{\mathbb{R}}
\newcommand{\complex}{\mathbb{C}}
\newcommand{\tr}{\operatorname{tr}}
\newcommand{\Id}{\operatorname{Id}}
\newcommand{\diag}{\operatorname{diag}}

\title[Spectrum of random quantum channels]{Limiting spectral distribution of random self-adjoint quantum channels}
\author[C.\ Lancien \and P.\ O.\ Santos \and P.\ Youssef]{%
        Cécilia Lancien \and 
        Patrick Oliveira Santos \and 
        Pierre Youssef
        }

\address{C\'ecilia Lancien. CNRS \& Institut Fourier UMR 5582, Universit\'e Grenoble Alpes, Grenoble, France.}
\email{cecilia.lancien@univ-grenoble-alpes.fr}
\address{Patrick Oliveira Santos. Laboratoire d'Analyse et de Math\'ematiques Appliqu\'ees UMR 8050, Universit\'e Gustave Eiffel, Universit\'e Paris Est Cr\'eteil, Marne-la-Vall\'ee, France.}
\email{patrick.oliveirasantos@u-pem.fr}
\address{Pierre Youssef. Division of Science, NYU Abu Dhabi, Abu Dhabi, UAE \& Courant Institute of Mathematical Sciences, New York University, New York, USA.}
\email{yp27@nyu.edu}


\begin{document}
\newtheorem{theorem}{Theorem}[section]

\newtheorem{corollary}[theorem]{Corollary}
\newtheorem{lemma}[theorem]{Lemma}
\newtheorem{conjecture}[theorem]{Conjecture}
\newtheorem{proposition}[theorem]{Proposition}

\theoremstyle{definition}
\newtheorem{example}[theorem]{Example}
\newtheorem{definition}[theorem]{Definition}
\newtheorem{remark}[theorem]{Remark}

\maketitle
\begin{abstract}
We study the limiting spectral distribution of quantum channels whose Kraus operators are sampled as $ n\times n$ random Hermitian matrices satisfying certain assumptions. 
We show that when the Kraus rank goes to infinity with $n$, the limiting spectral distribution (suitably rescaled) of the corresponding quantum channel coincides with the semi-circle distribution. When the Kraus rank is fixed, the limiting spectral distribution is no longer the semi-circle distribution. It corresponds to an explicit law, which can also be described using tools from free probability. 
\end{abstract}

\section{Introduction}\label{sec: introduction}


In quantum physics, the state of an $n$-dimensional system is described by a density operator on $\mathbb C^n$, i.e.~$\rho\in M_n(\mathbb C)$ a Hermitian positive semidefinite matrix with trace $1$: $\rho\succeq 0$ and $\tr(\rho)=1$. A transformation of such a quantum system is described by a quantum channel on $M_n(\complex)$, i.e.~$\Phi: M_n(\mathbb C)\to M_n(\mathbb C)$ a completely positive and trace-preserving linear map. We recall that a linear map $\Phi:M_n(\mathbb C)\to M_n(\mathbb C)$ is said to be
 \begin{itemize}
	\item positive if it preserves the fact of being Hermitian and positive semidefinite: for all $X\in M_n(\mathbb C)$, $X\succeq 0$ implies that $\Phi(X)\succeq 0$;
        \item completely positive if $\Phi\otimes \Id:M_{n^2}(\mathbb C)\to M_{n^2}(\mathbb C)$ is positive (where $\Id:M_n(\mathbb C)\to M_n(\mathbb C)$ denotes the identity map);
	\item trace-preserving if it preserves the trace: for all $X\in M_n(\mathbb C)$, $\tr(\Phi(X))=\tr(X)$.
\end{itemize}
A quantum channel $\Phi$ thus maps quantum states to quantum states (and so does $\Phi\otimes\mathrm{id}$).
  	
The action of a completely positive map $\Phi$ on $M_n(\mathbb C)$ can always be described in the following (non-unique) way, called a Kraus representation of $\Phi$ (see e.g.~\cite[Section 2.3.2]{aubrun2017} or \cite[Chapter 2]{wolf2012}): There exist $d\in\mathbb N$ and $K_1,\ldots, K_d\in M_n(\mathbb C)$, called Kraus operators of $\Phi$, such that
\begin{equation} \label{eq:Kraus}
	\Phi:X\in M_n(\mathbb C) \mapsto \sum_{i\in[d]} K_iXK_i^*\in M_n(\mathbb C), 
\end{equation}
where $K_i^*$ stands for the adjoint of $K_i$ (and where we have used the shorthand notation $[d]$ for the set of integers $\{1,\ldots,d\}$). The fact that $\Phi$ is trace-preserving is equivalent to the following constraint on the Kraus operators $K_1,\ldots, K_d$:
\[ \sum_{i\in[d]} K_i^*K_i = \Id. \]
The smallest $d$ such that an expression of the form of equation \eqref{eq:Kraus} for $\Phi$ exists is called the Kraus rank of $\Phi$. It is always at most $n^2$ for a completely positive map on $M_n(\mathbb C)$.

Given a completely positive map $\Phi$ on $M_n(\complex)$, its adjoint (or dual) is defined by duality with respect to the Hilbert-Schmidt inner product, i.e.~as the map $\Phi^*$ on $M_n(\complex)$ (which is completely positive as well) such that, for all $X, Y\in M_n(\complex)$,
\begin{equation} \label{eq:def-adjoint}
\tr\left(X\Phi^*(Y)\right)=\tr\left(\Phi(X)Y\right). 
\end{equation}
$\Phi$ being trace-preserving is equivalent to $\Phi^*$ being unital, i.e.~such that $\Phi^*(\Id)=\Id$.

Note that, identifying $M_n(\mathbb C)$ with $\mathbb C^n\otimes\mathbb C^n$, a linear map $\Phi:M_n(\mathbb C)\to M_n(\mathbb C)$ can equivalently be seen as a linear map $M_\Phi:\mathbb C^n\otimes\mathbb C^n\to\mathbb C^n\otimes\mathbb C^n$, i.e.~an element of $M_{n^2}(\mathbb C)$. Concretely, a completely positive linear map
\[ \Phi:X\in M_n(\mathbb C) \mapsto \sum_{i\in[d]} K_iXK_i^*\in M_n(\mathbb C) \]
can be identified with
\begin{equation} \label{eq:def-M-Phi} 
M_\Phi = \sum_{i\in[d]} K_i\otimes\overline K_i \in M_{n^2}(\mathbb C),
\end{equation}
where $\overline K_i$ stands for the entry-wise conjugate of $K_i$, in the canonical basis of $\complex^n$. This identification preserves the spectrum, i.e.~$\mathrm{spec}(\Phi)=\mathrm{spec}(M_\Phi)$. Moreover, the matrix version of the adjoint $\Phi^*$ of $\Phi$ is simply the adjoint of $M_\Phi$, i.e.~$M_{\Phi^*}=M_\Phi^*$.

In this paper, we will consider quantum channels $\Phi$ whose Kraus operators are Hermitian matrices, which ensures that $\Phi$ itself is Hermitian (in the sense that it is equal to its adjoint $\Phi^*$, as defined by equation \eqref{eq:def-adjoint}), or equivalently that its matrix version $M_{\Phi}$ is Hermitian. This may look like a restrictive setting, but it, in fact, encompasses all Hermitian quantum channels. Indeed, it is not hard to see that any Hermitian completely positive map $\Phi$ on $M_n(\complex)$ admits a Kraus representation with Hermitian Kraus operators, where we can additionally guarantee that the number of such operators is at most twice the Kraus rank of $\Phi$ (i.e.~in general at most $2n^2$). Concretely, if $K_1,\ldots,K_d\in M_n(\complex)$ are $d$ Kraus operators for $\Phi$, then $K_{1,R},K_{1,I},\ldots,K_{d,R},K_{d,I}\in M_n(\complex)$ are $2d$ Hermitian Kraus operators for $\Phi$, where given $K\in M_n(\complex)$, we set $K_R=(K+K^*)/2,\,K_I=-i(K-K^*)/2$, so that $K=K_R+iK_I$ and $K_R^*=K_R,\,K_I^*=K_I$. The latter claim follows from the observation that,if $M_\Phi^*=M_\Phi$, then we can re-write $M_\Phi=(M_\Phi+M_\Phi^*)/2$, and thus
\[ M_\Phi = \frac{1}{2}\sum_{j\in[d]} \left(K_j\otimes\overline{K}_j+K_j^*\otimes\overline{K}_j^*\right) = \sum_{j\in[d]}\left(K_{j,R}\otimes\overline{K}_{j,R}+K_{j,I}\otimes\overline{K}_{j,I}\right). \]
Now, many interesting and well-studied classes of quantum channels are Hermitian: depolarizing channels, dephasing channels, and Pauli channels, to name just a few. Note that such channels are, in particular, unital.

This work aims to study the spectrum of a randomly generated quantum channel as the underlying dimension $n$ goes to infinity. More precisely, given a quantum channel  $\Phi$ on $M_n(\mathbb C)$ whose Kraus operators $K_1,\ldots, K_d\in M_n(\complex)$ have been randomly sampled, we aim to characterize its asymptotic spectrum (i.e.~its spectrum in the limit where $n$ goes to infinity). 
Our study will involve two regimes, one where the Kraus rank $d$ is fixed and only $n$ grows, and the other one where $d=d(n)$ is a growing function of $n$. 

Previous related works were mostly concerned with identifying the spectral gap of a random quantum channel, i.e.~the difference between its largest and second largest eigenvalues. Indeed, it is known that a quantum channel $\Phi$ on $M_n(\mathbb C)$ always has its largest (in modulus) eigenvalue $\lambda_1(\Phi)$ equal to $1$ (with an associated eigenvector which is a positive semidefinite matrix), implying that $\Phi$ always has a fixed state. It was established that random quantum channels generically have their largest eigenvalue $1$ isolated from the rest of the spectrum. All other eigenvalues are of order at most $1/\sqrt{d}$ (in modulus). This was shown first for specific models, where Kraus operators were sampled either as independent Haar unitaries \cite{hastings2007,pisier2014}, or as blocks of a Haar isometry \cite{gonzalez2018}, or as independent Ginibre matrices (i.e.~matrices having i.i.d.~Gaussian entries) \cite{lancien2022}, and recently in greater generality \cite{lancien2023note}. 

On the other hand, much less is known concerning the asymptotic distribution of the bulk of the spectrum. To understand this, one studies the empirical spectral distribution, which, given a matrix $M \in M_n(\complex)$, is defined as 
\begin{align*}
    \mu_{M}:=\frac{1}{n}\sum_{k\in[n]} \delta_{\lambda_k(M)},
\end{align*}
where $\lambda_k(M)$ is the $k$th eigenvalue of $M$. 
In \cite{bruzda2009}, a model where Kraus operators are sampled as blocks of a Haar isometry was studied. It was conjectured, from heuristic arguments and numerical simulations, that in the regime where $d=d(n)=n^2$, the empirical spectral distribution (rescaled by a factor $\sqrt{d}$) of such random quantum channel converges towards a circular distribution as $n$ grows. It was later shown in \cite{aubrun2012} that, in the same regime $d=d(n)$ of order $n^2$, the empirical singular value distribution (again rescaled by a factor $\sqrt{d}$) of a random quantum channel whose Kraus operators are sampled as independent Ginibre matrices converges towards a quarter-circular distribution as $n$ grows. 
We believe that for both models (and for more general ones) and in any regime $d=d(n)\to \infty$, the limiting spectral distribution coincides with the circular law.

In this work, we embark on the investigation of the limiting spectral distribution of quantum channels in the Hermitian setting. This serves as a precursor to our broader research goals involving the non-Hermitian case and the conjecture presented above. Classical problems in Random Matrix Theory have traditionally prioritized the exploration of Hermitian matrices before delving into their non-Hermitian counterparts, partly due to the inherent technical complexities introduced by non-Hermitian systems, notably the instability of the spectrum under perturbations (see \cite[Chapter 11]{bai2010spectral}). For instance, while the limiting spectral distribution of an $n\times n$ Hermitian matrix with i.i.d.~(up to symmetry) centered entries of variance $1/n$ was shown to be Wigner's semi-circle distribution, it is almost half a century later that the analogous result in the non-Hermitian case, Girko's circular law theorem, was established in full generality \cite{tao2010random}. Understanding the limiting spectral distribution in the non-Hermitian case follows the Hermitization technique invented by Girko \cite{girko1983circular,girko1994circular}. The latter requires a quantitative control on the smallest singular value of the corresponding random matrix model (see  \cite{bordenave2012around} for an introduction to the method), making the problem significantly more involved. In the context of quantum channels, the associated random matrix model exhibits dependencies among its entries, which adds a layer of difficulty compared to classical random matrix models, even in the Hermitian setting.

Now, let us shift our focus back to the specific context of this paper. Here, we will consider the case where the Kraus operators of the quantum channel $\Phi$ are chosen to be random Hermitian operators. This ensures that the resulting completely positive map $\Phi$, or equivalently its matrix version $M_{\Phi}$, is Hermitian. Conversely, as explained earlier, any Hermitian completely positive map $\Phi$ can be written with Hermitian Kraus operators. Moreover, we aim to keep our assumptions regarding the distribution of these random Kraus operators as minimal as possible. Our objective is to gain a comprehensive understanding of the spectrum of $M_{\Phi}$ as the dimension $n$ increases without imposing specific constraints on the scaling relationship between $d$ and $n$.

To keep the introduction light, we state an informal version of our main result and differ from the rigorous statement, requiring notions from free probability, to Section \ref{sec: main results} (see Theorem \ref{theorem: main theorem}). 

\begin{theorem}[Informal Statement] \label{th:implication}
    Let $W_1,\ldots,W_d\in M_n(\complex)$ be independent Hermitian random matrices such that, for each $i\in[d]$, $\esp(W_i)=0$ and $\esp(W_i^2)=\Id$. Suppose additionally that, for each $i\in[d]$, the spectral distribution of $W_i$ converges to some distribution $\mu_i$ as $n\to\infty$ and that the joint asymptotic distribution of the family $(W_i)_{i\in[d]}$ is determined by the family $(\mu_i)_{i\in[d]}$. Next, set $K_i=W_i/\sqrt{d}$ for each $i\in[d]$ and define $\Phi$ as the random completely positive map on $M_n(\complex)$ having $K_1,\ldots,K_d$ as Kraus operators, i.e.
    \[ \Phi:X\in M_n(\complex)\mapsto \sum_{i\in[d]} K_iXK_i\in M_n(\complex). \]
    First, $\Phi$ is, on average, trace-preserving and unital. And second,
    \begin{itemize}
        \item If $d$ is fixed, then the spectral distribution of $\Phi-\esp(\Phi)$ converges to a specific distribution, depending only on $\mu_1,\ldots,\mu_d$, as $n\to \infty$.
        \item If $d=d(n)\to\infty$ as $n\to\infty$ and $W_i$ are i.i.d, then the spectral distribution of $\sqrt{d}(\Phi-\esp(\Phi))$ converges to the semicircular distribution as $n\to\infty$.
    \end{itemize} 
\end{theorem}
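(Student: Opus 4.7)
Using the identification $\mathrm{spec}(\Phi) = \mathrm{spec}(M_\Phi)$ from \eqref{eq:def-M-Phi}, the task reduces to analysing the empirical spectral distribution of the Hermitian matrix $Y_n := M_\Phi - \esp(M_\Phi) \in M_{n^2}(\complex)$, where $M_\Phi = d^{-1}\sum_{i\in[d]} W_i\otimes \overline{W_i}$. I would use the method of moments; writing $\tau_N := \frac{1}{N}\tr$ for the normalised trace on $M_N(\complex)$, the fundamental identity is
\[ \tau_{n^2}(M_\Phi^p) \;=\; \frac{1}{d^p}\sum_{i_1,\ldots,i_p\in[d]} \bigl|\tau_n(W_{i_1}\cdots W_{i_p})\bigr|^2, \]
which follows from $(A\otimes B)(C\otimes D) = AC\otimes BD$ together with $\tr(A\otimes B) = \tr(A)\tr(B)$. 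An analogous (longer) expansion gives the moments of $Y_n$ in terms of traces of products involving both $W_i\otimes \overline{W_i}$ and $C_i := \esp(W_i\otimes \overline{W_i})$.

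\textbf{Regime of fixed $d$.} The assumption that the joint asymptotic distribution of $(W_i)$ is determined by the marginals translates, in free-probabilistic terms, to asymptotic freeness: there exist free self-adjoint variables $w_1,\ldots,w_d$ in a tracial $*$-algebra $(\mathcal A,\tau)$ with distributions $\mu_1,\ldots,\mu_d$ such that $\tau_n(W_{i_1}\cdots W_{i_p}) \to \tau(w_{i_1}\cdots w_{i_p})$ almost surely (in expectation, upgraded by a variance bound). Inserting this into the moment identity and summing over words yields
\[ \tau_{n^2}(M_\Phi^p) \;\xrightarrow[n\to\infty]{}\; (\tau\otimes\bar\tau)\!\Bigl(\bigl(\tfrac{1}{d}\textstyle\sum_i w_i\otimes\bar w_i\bigr)^{\!p}\Bigr), \]
which identifies the limit as the distribution of $\tfrac{1}{d}\sum_i w_i\otimes\bar w_i$ in the tensor product algebra, an explicit law depending only on $\mu_1,\ldots,\mu_d$. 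Since $\esp(M_\Phi) = d^{-1}\sum_i C_i$ is deterministic of bounded rank under the standing hypotheses, subtracting it leaves the bulk distribution untouched by the standard rank inequality for empirical spectral CDFs, so the same limit governs $Y_n$.

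\textbf{Regime $d = d(n)\to\infty$, i.i.d.\ case.} Here I would run a free central limit theorem at the level of the expected moments of $\sqrt d\,Y_n$. Expanding
\[ \esp\!\bigl[\tau_{n^2}\bigl((\sqrt d\, Y_n)^p\bigr)\bigr] \;=\; d^{-p/2}\!\!\sum_{i_1,\ldots,i_p\in[d]}\! \esp\!\Bigl[\tau_{n^2}\!\Bigl(\prod_{j=1}^p \bigl(W_{i_j}\otimes\overline{W_{i_j}} - C_{i_j}\bigr)\Bigr)\Bigr], \]
I would then group words $(i_1,\ldots,i_p)$ by their multiset pattern: a word involving $k$ distinct indices contributes $\sim d^k$ such terms. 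Two mechanisms restrict the $n\to\infty$ limit to non-crossing pair partitions. First, any index appearing an odd number of times forces the limit trace to vanish, by centering combined with asymptotic freeness, so only $k \leq p/2$ survives. Second, at the critical count $k = p/2$ (which forces $p$ even and each index exactly twice), crossing pair partitions still contribute $0$ by freeness of the centered i.i.d.\ copies $w_i$, whereas each of the $C_{p/2}$ non-crossing pair partitions contributes $\tau(w^2)^{p/2} = 1$. The prefactor $d^{-p/2}$ times the leading $d^{p/2}\cdot C_{p/2}$ contribution then produces the Catalan number $C_{p/2}$ for even $p$ and $0$ for odd $p$, matching the moments of the standard semicircle distribution, which are moment-determining.

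\textbf{Main obstacle.} The delicate point is the growing-$d$ regime, where the two limits $n\to\infty$ and $d(n)\to\infty$ must be safely interchanged in the above expansion. Concretely, one needs quantitative control, uniform in $n$, on the error $\esp\!\bigl[\bigl|\tau_n(W_{i_1}\cdots W_{i_p}) - \tau(w_{i_1}\cdots w_{i_p})\bigr|^2\bigr]$, summable over the $\lesssim d^{p/2}$ dominant words at each fixed $p$, and a separate verification that the $C_{i_j}$-correction terms in the expansion of $Y_n$ contribute only negligibly. Once these effective estimates are in place, the combinatorial free-CLT counting outlined above together with a standard variance argument for concentration of $\tau_{n^2}(Y_n^p)$ around its mean deliver the announced convergence of the empirical spectral distribution.
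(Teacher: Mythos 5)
Your overall strategy is the same as the paper's: pass to $M_\Phi=\frac1d\sum_i W_i\otimes\overline{W_i}$, use the factorization $\tr(A\otimes\overline A)=|\tr(A)|^2$ to reduce moments of $M_\Phi$ to mixed moments of the $W_i$, identify the fixed-$d$ limit with the distribution of $\frac1d\sum_i w_i\otimes w_i$ for free $w_i$ (the paper's ``tensor convolution''), and in the growing-$d$ regime run a free CLT by grouping words according to the partition they induce, killing singleton blocks by independence plus centering, suppressing blocks of size $\ge 3$ by the $d^{|\pi|-p/2}$ count together with a uniform moment bound, and evaluating crossing versus non-crossing pair partitions by freeness to land on the Catalan numbers. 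That is exactly the skeleton of the paper's proof of Theorem \ref{theorem: main theorem}, including the variance bound for convergence in probability.

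There is, however, one genuine gap: your treatment of the centering term $C_i=\esp(W_i\otimes\overline{W_i})$. You claim it is ``of bounded rank'' so that the rank inequality leaves the bulk untouched. This is false in the standard examples. For Wigner matrices the paper computes $\esp(W\otimes\overline W)=\psi\psi^*+\frac1n(F-\diag(F))$, whose rank is of order $n^2$ (the flip operator $F-\diag(F)$ has $n(n-1)$ nonzero eigenvalues), so the rank inequality gives nothing. What actually saves the argument is that the \emph{spectral distribution} of $C_i$ converges weakly to $0$ in the moment sense, i.e.\ $\tau^{(n^2)}(|C_i|^p)\to 0$ for all $p$ --- this is the paper's Assumption \ref{assumption A3}, which is a genuinely separate hypothesis not implied by $\esp(W_i)=0$ and $\esp(W_i^2)=\Id$ (the paper verifies it via asymptotic freeness of independent copies of $W_i$, Lemma \ref{lemma: sufficient condition to A3}). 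Given that, one removes $C_i$ by a H\"older estimate on mixed traces (the paper's Lemma \ref{lemma: limit distribution of B_i}): any word containing at least one factor whose moments vanish asymptotically contributes nothing in the limit. So your conclusion is correct, but the mechanism you propose for the centering step does not work and must be replaced by the weak-vanishing-plus-H\"older argument; the same fix is what makes the ``$C_{i_j}$-correction terms are negligible'' claim in your growing-$d$ regime rigorous.
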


The specific assumption underlying the family of independent matrices $(W_i)_{i\in[d]}$ is precisely asymptotic freeness, a concept introduced by Voiculescu \cite{voiculescu1991limit} within the framework of Free Probability. Asymptotic freeness proves instrumental in a systematical examination of the asymptotic behavior of random matrices. Importantly, this assumption is not restrictive as many classical independent random matrices are known to be asymptotically free. However, this property is no longer valid in the context of tensors (see \cite{collins2017freeness}), preventing a direct derivation of the asymptotic joint spectrum of these random tensors, which is the setting of interest in this paper. To overcome this, we analyze the moment method directly on our model in order to derive the limiting distribution. 

The required notions from Free Probability are presented in Section~\ref{sec: main results}, where the main rigorous theorem is stated. The proofs are carried out in Section~\ref{sec: proofs}. 
Section~\ref{sec: examples} regroups a variety of examples for which our result applies, while Section~\ref{sec: perspectives} offers perspectives for future work.

\subsection*{Acknowledgments} Part of this work was completed during a stay of the second named author at New York University in Abu Dhabi, partly funded by a doctoral mobility grant delivered by Universit\'e Gustave Eiffel; he would like to thank both institutions for their support and the excellent working assumptions. The first named author was supported by the ANR projects ESQuisses (grant number ANR-20-CE47-0014-01), STARS (grant number ANR-20-CE40-0008) and QTraj (grant number ANR-20-CE40-0024-01).


\section{Preliminaries and rigorous statement of the main result}\label{sec: main results}
\subsection{Preliminaries}\label{subsec: free probability}
We begin by recalling some notation from Free Probability; see \cite{nica2006lectures}. A noncommutative probability space is a pair $(\mathcal{A},\tau)$, where $\mathcal{A}$ is a unital algebra equipped with a tracial state $\tau$, that is, $\tau$ is linear, $\tau(1)=1$ and $\tau(ab)=\tau(ba)$. We say that subalgebras $\mathcal{A}_1,\ldots, \mathcal{A}_d$ are free if 
\begin{align}\label{freeness identity}
    \tau(a_1\ldots a_p)=0
\end{align}
whenever $a_i \in \mathcal{A}_{j_i}$, $\tau(a_i)=0$ for all $i\in[p]$ and $j_1 \ne j_2 \ne \cdots \ne j_p$. We say that random variables $a_1,\ldots,a_d \in \mathcal{A}$ are free if their algebras are free. The distribution of a self-adjoint variable $a=a^*$ is defined as the collection of moments
\begin{align*}
    \{\tau(a^p):\, p \in \mathbb{N}\}.
\end{align*}
In particular, there always exists a measure $\mu$ on $\real$ such that
\begin{align*}
    \tau(a^p)=\int x^p\, \text{d}\mu,
\end{align*}
for every $p\in\mathbb N$. Such a measure is also called the distribution of $a$. A particular example of free variables is the free semicircular system $(s_1,\ldots, s_d)$, whose moments satisfy
\begin{align*}
    \tau(s_{i_1}\cdots s_{i_p})=\sum_{\pi \in NC_2(p)} \prod_{(l,k) \in \pi} \delta_{i_li_k}.
\end{align*}
Here, $NC_2(p)$ is the set of all noncrossing pair partitions of $[p]$, namely, each block of $\pi \in NC_2(p)$ has cardinality $2$ and $\pi$ does not contain two blocks $\{i,j\},\{k,l\}$ such that $i<k<j<l$. In particular, the moments of the semicircular variable are
\begin{align*}
    \tau(s^p)=|NC_2(p)|=\int x^p f_{sc}(x)\, \text{d}x,
\end{align*}
where $f_{sc}$ denotes the density of the semi-circle distribution $\mu_{sc}$ 
\begin{align}\label{semicircular density}
    f_{sc}(x)=\frac{1}{2\pi}\sqrt{4-x^2} \mathbf{1}_{|x| \le 2}.
\end{align}
Given a matrix $M \in M_n(\complex)$, we define its normalized trace as
\begin{align*}
    \tau^{(n)}(M):=\frac{1}{n}\tr(M).
\end{align*}
In particular, for two matrices $M,N \in M_n(\complex)$, we have
\begin{align*}
    \tau^{(n^2)}(M\otimes N)=\tau^{(n)}\otimes \tau^{(n)}(M \otimes N)=\tau^{(n)}(M)\tau^{(n)}(N).
\end{align*}
We say that a random Hermitian matrix $M\in M_n(\complex)$ is \textit{normalized} if
\begin{align*}
    &\esp\left(\tau^{(n)}(M)\right)=0;\\
    &\esp \left(\tau^{(n)}(M^2)\right)=1.
\end{align*}

We say that a sequence $(M_n)_{n\in\mathbb N}$ of random Hermitian matrices $M_n\in M_n(\complex)$ such that $\tr(M_n^p)$ is integrable for all $p \ge 1$ converges weakly in probability (resp.~in expectation) to $\mu$, if $(\mu_{M_n})_{n\in\mathbb N}$ converges weakly to $\mu$ in probability (resp.~in expectation). Equivalently, for any $p \ge 1$, we have
\begin{align*}
    \tau^{(n)}(M_n^p)=\frac{1}{n}\text{tr}(M_n^p) \underset{n\to\infty}{\to} \int x^p\, \text{d}\mu\
\end{align*}
in probability and
\begin{align*}
    \esp \left(\tau^{(n)}(M_n^p)\right)=\esp\left(\frac{1}{n}\text{tr}(M_n^p) \right)\underset{n\to\infty}{\to} \int x^p\, \text{d}\mu
\end{align*}
for the convergence in expectation. Whenever $(\mu_{M_n})_{n\in\mathbb N}$ converges weakly to $\mu$, we denote it by $\mu_{M_n} \Rightarrow \mu$, or simply $M_n \Rightarrow \mu$ as $n \to \infty$. In particular, we can find a noncommutative random variable $a$ such that $a$ has distribution $\mu$ and $M_n$ converges to $a$.

Finally, we say that $d$ random Hermitian matrices $M_1,\ldots, M_d \in M_n(\complex)$ are asymptotically free in probability if 
\begin{align*}
    \lim_{n \to \infty}\tau^{(n)}\Big(\left(M_{i_1}^{p_1}-\tau^{(n)}(M_{i_1}^{p_1})\right)\cdots 
    \left(M_{i_m}^{p_m}-\tau^{(n)}(M_{i_m}^{p_m})\right)\Big)=0
\end{align*}
in probability, for all $m \ge 1$, $i_1 \ne i_2 \ne \cdots \ne i_m \in [d]$ and $p_1,\ldots, p_m \ge 1$. They are asymptotically free in expectation if
\begin{align*}
    \lim_{n \to \infty}\esp \Big[\tau^{(n)}\Big(\left(M_{i_1}^{p_1}-\esp\left(\tau^{(n)}(M_{i_1}^{p_1})\right)\right)\cdots 
    \left(M_{i_m}^{p_m}-\esp\left(\tau^{(n)}(M_{i_m}^{p_m})\right)\right)\Big)\Big]=0,
\end{align*}
for all $m \ge 1$, $i_1\ne i_2 \ne \cdots \ne i_m \in [d]$ and $p_1,\ldots, p_m \ge 1$. This is equivalent to \eqref{freeness identity} in the limit for the algebras $\mathcal{A}(M_i)$ generated by each $M_i$. As usual, we write $M+\lambda:=M+\lambda \Id$, where $M \in M_n(\complex)$, $\lambda \in \complex$.

\begin{remark}
    We have defined weak convergence as the convergence of the moments of all order. In particular, this requires that $\tr(M_n^p)$ is integrable for all $p \ge 1$. In several cases, however, such strong integrability is not needed, and similar results can be proved for a larger class of random matrices via truncation techniques \cite[Theorem 2.1.21]{anderson2010introduction}. 
\end{remark}

\subsection{Rigorous results}
Let $\mu_1,\ldots,\mu_d$ be probability measures on $\real$ and $a_1,\ldots,a_d$ be free variables with distribution $\mu_i$, respectively. We define the tensor measure $\mu_i \star \mu_i$ as the measure associated with $a_i \otimes a_i$ and the tensor convolution $\mu_1 \circledast \cdots \circledast \mu_d$ as the distribution of
\begin{align*}
    \sum_{i \in [d]}a_i \otimes a_i=a_1 \otimes a_1 +\cdots+a_d \otimes a_d.
\end{align*}
In other words, we have
\begin{align*}
    \int x^k\, \text{d}\mu_i \star \mu_i=\tau\otimes \tau(a_i^k \otimes a_i^k)=\tau^2(a_i^k),
\end{align*}
and
\begin{align*}
    \int x^k\, \text{d}\mu_1 \circledast \cdots \circledast \mu_d=\tau\otimes \tau\left(\sum_{i \in [d]}a_i \otimes a_i\right)^k =\sum_{i \in [d]^k}\tau^2(a_{i_1}\cdots a_{i_k}).
\end{align*}
Note that $\mu_1 \circledast \cdots \circledast \mu_d$ is not necessarily the free convolution of $\mu_i \star \mu_i$. Indeed, it was proved in \cite{collins2017freeness} that freeness for tensor products $a_i\otimes a_i$ does not follow from freeness of $a_i$.

In the sequel, we will be considering random Hermitian matrices $W_1,\ldots, W_d\in M_n(\complex)$ satisfying the following assumptions: \\
\begin{enumerate}[label=(\textbf{A.\arabic*})]
    \item\label{assumption A1} For each $i \in [d]$,  $W_i\in M_n(\complex)$ converges weakly in probability and in expectation to $\mu_i$;
    \item\label{assumption A2} The family $(W_i)_{i \in [d]}$ is in probability and in expectation asymptotically free, for each $d$ fixed.
    \item\label{assumption A3} For each $i \in [d]$, $\esp (W_i \otimes \overline{W_i})\in M_{n^2}(\complex)$ converges weakly to $0$.
    \item\label{assumption A4} The matrices $(W_i)_{i \in [d]}$ are independent and identically distributed.\\
\end{enumerate}

As we will see in Section~\ref{sec: examples}, a variety of classical random matrix models satisfy those assumptions. 
We are ready to state our main result. 
\begin{theorem}\label{theorem: main theorem}
    Let $W_1,\ldots,W_d \in M_n(\complex)$ be centered Hermitian random matrices satisfying Assumptions \ref{assumption A1},\ref{assumption A2}, \ref{assumption A3} and $d=d(n)$. Then, the following holds.
    \begin{enumerate}[label=(\ref{theorem: main theorem}.\roman*)]
        \item\label{main theorem: part 1} If $d$ is fixed, then   
    \begin{align*}
        \Delta:=\frac{1}{\sqrt{d}}\sum_{i \in [d]} \left(W_i \otimes \overline{W}_i-\esp(W_i \otimes \overline{W}_i) \right)
    \end{align*}
    converges weakly in probability and in expectation to the tensor convolution of $\tilde{\mu}_1,\ldots,\tilde{\mu}_d$, where $\tilde{\mu}$ denotes the dilation of $\mu$, i.e.
    \begin{align*}
        \int x^p\, \text{d}\tilde\mu:=\frac{1}{d^{p/2}}\int x^p\, \text{d}\mu.
    \end{align*}
    This means that
    \begin{align*}
        \mu_{\Delta} \underset{n\to\infty}{\Rightarrow} \tilde\mu_1 \circledast \cdots \circledast \tilde\mu_d,
    \end{align*}
    in probability and in expectation.
    
    \item\label{main theorem: part 2} If $d=d(n)$ diverges, $(W_i)_{i \in [d]}$ are also normalized and satisfy Assumption  \ref{assumption A4}, then $\Delta$
    converges weakly in probability and in expectation to the semicircular distribution $\mu_{sc}$. This means that
    \begin{align*}
        \mu_{\Delta} \underset{n\to\infty}{\Rightarrow} \mu_{sc},
    \end{align*}
    in probability and in expectation.
    \end{enumerate}
\end{theorem}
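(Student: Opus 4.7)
The plan is to use the method of moments: compute $\lim_n \esp[\tau^{(n^2)}(\Delta^p)]$ for each $p \in \mathbb N$ and identify the limit with the $p$-th moment of the target distribution, then upgrade to convergence in probability via a parallel variance estimate.

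Write $Y_i := W_i \otimes \overline{W}_i$, $\bar Y_i := \esp(Y_i)$, and $Z_i := Y_i - \bar Y_i$, so that $\Delta = d^{-1/2}\sum_i Z_i$. The key identity is the tensor-trace factorization
\begin{equation*}
\tau^{(n^2)}\bigl(Y_{i_1}\cdots Y_{i_p}\bigr) = \tau^{(n)}\bigl(W_{i_1}\cdots W_{i_p}\bigr)\,\overline{\tau^{(n)}\bigl(W_{i_1}\cdots W_{i_p}\bigr)} = \bigl|\tau^{(n)}(W_{i_1}\cdots W_{i_p})\bigr|^2,
\end{equation*}
which reduces everything to mixed moments of the $W_i$'s, exactly the objects controlled by \ref{assumption A2}. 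First I would expand $\tau^{(n^2)}(Z_{i_1}\cdots Z_{i_p})$ into $2^p$ terms indexed by the subset of positions at which $Y$ is replaced by $\bar Y$, and show by H\"older's inequality for normalized Schatten norms that every term containing at least one $\bar Y$ is negligible: \ref{assumption A3} gives $\|\bar Y_i\|_p \to 0$ for every $p$ (using Hermiticity and $\tau^{(n^2)}(\bar Y_i^{2k}) \to 0$), while \ref{assumption A1} provides a uniform bound on $\esp\|Y_i\|_p^p = \esp[\tau^{(n)}(|W_i|^p)^2]$ via bounded expected moments plus vanishing variance of $\tau^{(n)}(W_i^{2k})$.

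Once the centering is discarded, I would invoke \ref{assumption A2} in expectation to pass $\tau^{(n)}(W_{i_1}\cdots W_{i_p})$ to the free limit $\tau(a_{i_1}\cdots a_{i_p})$ (which is real because the free cumulants of self-adjoint elements are real, so every term of the moment-cumulant expansion restricted to monochromatic non-crossing partitions is real), and combine it with \ref{assumption A2} in probability to deduce $\esp|\tau^{(n)}(W_{i_1}\cdots W_{i_p})|^2 \to \tau(a_{i_1}\cdots a_{i_p})^2$ via vanishing variance. Summing the resulting contributions over the finite family $\vec i \in [d]^p$ yields the moments of $\Delta$ in part \ref{main theorem: part 1} and identifies them with those of $\tilde\mu_1 \circledast \cdots \circledast \tilde\mu_d$. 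Convergence in probability is obtained in parallel by expanding $(\tau^{(n^2)}(\Delta^p))^2$ into double-word traces in the $W_i$'s and arguing, as above, that expectations factorize in the limit.

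For part \ref{main theorem: part 2}, where $d = d(n) \to \infty$ and the $W_i$ are i.i.d.\ and normalized, I would classify tuples $\vec i \in [d]^p$ by their pattern of index repetitions. Using $\tau(a_i) = 0$, $\tau(a_i^2) = 1$, and freeness across distinct indices, the moment-cumulant expansion yields $\tau(a_{i_1}\cdots a_{i_p}) = 1$ exactly when the partition induced by $\vec i$ is a non-crossing pair partition with distinct labels on distinct pairs, and $0$ whenever some index appears an odd number of times; intermediate patterns (fewer distinct labels) give contributions that are bounded but too few in number to survive the normalization. The surviving tuples number $|NC_2(p)|\,d(d-1)\cdots(d-p/2+1) = |NC_2(p)|\,d^{p/2}(1+o(1))$, each contributing $1$, so after dividing by $d^{p/2}$ one recovers the Catalan numbers $|NC_2(p)|$, i.e.\ the moments of $\mu_{sc}$ (odd $p$ giving zero trivially). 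The main obstacle lies precisely in this last step: one must uniformly control, over the $d^p$ tuples whose total count now diverges with $n$, both the centering error coming from Step 1 and the subleading pattern contributions, and simultaneously justify the variance estimate underlying the probability statement in the regime $d \to \infty$—the combinatorial bookkeeping here is where the proof becomes genuinely delicate.
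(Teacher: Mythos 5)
Your treatment of part \ref{main theorem: part 1} is essentially the paper's argument: the tensor--trace factorization $\tau^{(n^2)}\bigl((W_{i_1}\otimes\overline{W}_{i_1})\cdots(W_{i_p}\otimes\overline{W}_{i_p})\bigr)=\bigl|\tau^{(n)}(W_{i_1}\cdots W_{i_p})\bigr|^2$, the removal of the centering matrices via H\"older's inequality for normalized traces using \ref{assumption A3}, and the passage to the free limit via \ref{assumption A2} are exactly the content of Lemma \ref{lemma: limit distribution of B_i} and Corollary \ref{corollary: joint law of B_i}, after which the claim follows by summing finitely many terms. That part is fine.

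For part \ref{main theorem: part 2}, however, there is a genuine gap, and it sits precisely at the point you flag as ``delicate'' without resolving it. Your strategy is to discard the centering terms and the tuples with a bad repetition pattern using asymptotic estimates (H\"older plus \ref{assumption A3} for the former, $\tau(a_{i_1}\cdots a_{i_p})=0$ for the latter). But asymptotic freeness and Assumption \ref{assumption A3} give only $o(1)$ as $n\to\infty$ with no rate, while the number of tuples $\vec i\in[d]^p$ containing an index that appears exactly once is of order $d^p$, against a normalization of $d^{-p/2}$; you would need each such term to be $o(d^{-p/2})$, which none of the assumptions provides. The missing idea is Lemma \ref{lemma: centering assumption}: because the $B_i=W_i\otimes\overline{W}_i-\esp(W_i\otimes\overline{W}_i)$ are \emph{independent and exactly centered}, one has the identity $\esp\bigl(\tr(B_{i_1}\cdots B_{i_p})\bigr)=0$ \emph{exactly}, for every $n$, whenever some index occurs exactly once in the tuple. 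This is why the paper never expands $B_i$ back into $Y_i-\bar Y_i$ in this regime. Combined with the observation that, by exchangeability (\ref{assumption A4}), $\esp\bigl(\tau^{(n^2)}(B_{i_1}\cdots B_{i_p})\bigr)$ depends only on the partition of $[p]$ induced by the tuple --- so that there are only finitely many values to control, uniformly in $d$ --- the counting reduces to: partitions with a singleton block contribute exactly $0$; partitions with a block of size $\ge 3$ contribute $O_p(d^{|\pi|-p/2})=O_p(d^{-1/2})$ by the uniform moment bound from \ref{assumption A1}; and pair partitions contribute $\tau\otimes\tau(\pi)$, which is $1$ for non-crossing $\pi$ (normalization) and $0$ for crossing $\pi$ (freeness). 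Without the exact cancellation and the reduction to finitely many partition classes, the bookkeeping you describe cannot be closed. The same mechanism is what makes the variance estimate for convergence in probability go through when $d\to\infty$.
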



\section{Proofs}\label{sec: proofs}
We start with the following observation.
\begin{lemma}\label{lemma: limit distribution of B_i}
Suppose that Assumptions \ref{assumption A1} and \ref{assumption A3} hold. For each $i\in[d]$, $B_i=W_i \otimes \overline{W}_i-\esp \left(W_i \otimes \overline{W}_i\right)$ converges weakly in probability and in expectation to $a_i\otimes a_i$, where $a_i$ has distribution $\mu_i$.
\end{lemma}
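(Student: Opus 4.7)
Set $A_i := W_i \otimes \overline{W}_i$ and $C_i := \esp(A_i)$, so that $B_i = A_i - C_i$. The plan is to show, for each $p \ge 1$, that
\[ \tau^{(n^2)}(B_i^p) \underset{n\to\infty}{\longrightarrow} \tau\otimes\tau\big((a_i\otimes a_i)^p\big) = \tau(a_i^p)^2 = \Big(\int x^p\,\text{d}\mu_i\Big)^2, \]
both in probability and in expectation, after which the moment method yields the weak convergence $B_i \Rightarrow a_i \otimes a_i$. The strategy is to expand $(A_i - C_i)^p$, isolate the main term $A_i^p$, and show that every remaining term vanishes asymptotically.

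First, I would treat the main term. Hermiticity of $W_i$ gives $\overline{W}_i^p = \overline{W_i^p}$ and $\tau^{(n)}(W_i^p) \in \real$, so the tensor factorization of $\tau^{(n^2)}$ yields the key identity
\[ \tau^{(n^2)}(A_i^p) = \tau^{(n)}(W_i^p)\,\tau^{(n)}(\overline{W}_i^p) = \tau^{(n)}(W_i^p)^2. \]
Assumption \ref{assumption A1} says $\tau^{(n)}(W_i^p) \to \int x^p\,\text{d}\mu_i$ in both probability and expectation. The probability side follows by the continuous mapping theorem applied to squaring; the expectation side follows by uniform integrability, using the elementary bound $\tau^{(n)}(W_i^p)^2 \le \tau^{(n)}(W_i^{2p})$ whose expectation is controlled by \ref{assumption A1}. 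Hence $\tau^{(n^2)}(A_i^p) \to \tau(a_i^p)^2$ in both senses.

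Next, I would expand $(A_i - C_i)^p$ into $2^p$ signed ordered monomials $\pm s_1 \cdots s_p$ with $s_j \in \{A_i, C_i\}$, and bound every such monomial that contains at least one $C_i$ factor by the noncommutative H\"older inequality with exponents all equal to $p$,
\[ \big|\tau^{(n^2)}(s_1\cdots s_p)\big| \le \prod_{j\in[p]} \|s_j\|_{p,\tau^{(n^2)}}. \]
Interpreting Assumption \ref{assumption A3} as the moment convergence $\tau^{(n^2)}(C_i^k) \to 0$ for every $k$, and using monotonicity $\|\cdot\|_p \le \|\cdot\|_{2k}$ for $2k\ge p$, gives $\|C_i\|_{p,\tau^{(n^2)}} \to 0$ deterministically. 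On the other hand $\|A_i\|_{p,\tau^{(n^2)}}$ is bounded in probability and in expectation, since $\|A_i\|_{2k,\tau^{(n^2)}}^{2k} = \tau^{(n)}(W_i^{2k})^2$ is controlled by \ref{assumption A1}. Each such monomial is therefore $o(1)$ in both modes.

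The main obstacle I anticipate is threading \emph{both} modes of convergence through the argument. In probability everything reduces cleanly to continuous functions of convergent scalar moments. In expectation one must swap $\esp$ with limits at each step, which requires uniform integrability; this is obtained by reducing any product of the form $\tau^{(n)}(W_i^p)^k$ to higher moments $\tau^{(n)}(W_i^{pk})$ via Jensen-type inequalities on normalized traces, all controlled by \ref{assumption A1}. Once both modes are in place for every $p$, the moment characterization of weak convergence concludes the proof.
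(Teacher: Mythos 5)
Your proposal follows essentially the same route as the paper's proof: expand $(A_i-C_i)^p$ into $2^p$ monomials, kill every monomial containing a factor $C_i=\esp(W_i\otimes\overline{W}_i)$ via the noncommutative H\"older inequality together with Assumption \ref{assumption A3}, and identify the surviving term through the factorization $\tau^{(n^2)}\left(\left(W_i\otimes\overline{W}_i\right)^p\right)=\left(\tau^{(n)}(W_i^p)\right)^2$, which holds because $W_i$ is Hermitian. The only difference is that you spell out the uniform-integrability step needed to pass the convergence in expectation through the squaring, which the paper leaves implicit; the argument is correct.
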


\begin{proof}
For Hermitian matrices $M_1,M_2 \in M_n(\complex)$, such that $M_1 \Rightarrow \mu_1$ and $M_2 \Rightarrow 0$ as $n \to \infty$, we have
\begin{align*}
    M_1+M_2 \Rightarrow \mu_1.
\end{align*}
Indeed, using Holder's Inequality, we can bound the trace of a product of the matrices $M_1$ and $M_2$ by 
\begin{align}\label{ineq: trace inequality}
    \left|\tau^{(n)}\left(\prod_{j \in [m]}M_{i_j}\right)\right| \le \prod_{j \in [m]}\left(\tau^{(n)}|M_{i_j}|^{p_j}\right)^{1/p_j},
\end{align}
where $i \in \{1,2\}^m$ and $(p_j)_{j\in[m]}$ are conjugate exponents. If there is at least one $j\in[m]$ such that $i_j=2$, then the right-hand side of \eqref{ineq: trace inequality} goes to $0$. In particular, since
\begin{align*}
    \tau^{(n)}\left((M_1+M_2)^p\right)&=\sum_{i \in \{1,2\}^p}\tau^{(n)}\left(\prod_{j \in [p]}M_{i_j}\right),
\end{align*}
any term associated with $i \in \{1,2\}^p$ having at least one index $i_j=2$ will asymptotically vanish. Therefore,  
\begin{align*}
    \lim_{n \to \infty }\tau^{(n)}\left((M_1+M_2)^p\right)=\lim_{n \to \infty }\tau^{(n)}\left(M_1^p\right).
\end{align*}
The result of the lemma follows by taking $M_1=W_i\otimes \overline{W_i}$ and $M_2=-\esp(W_i\otimes \overline{W_i})$, for each $i\in[d]$. Indeed,
\begin{align*}
    \tau^{(n)}\otimes \tau^{(n)}\left(\left(W_i \otimes \overline{W}_i\right)^p\right)&=\left(\tau^{(n)}\left(W_i^p\right)\right)^2,
\end{align*}
since the eigenvalues of $W_i$ are real. Hence, the weak convergence in probability and in expectation of $W_i$ to $\mu_i$ implies that
\begin{align*}
    \tau^{(n)}\otimes \tau^{(n)}\left(\left(W_i \otimes \overline{W}_i\right)^p\right) \underset{n\to\infty}{\to} \left(\int x^p\, \text{d}\mu_i\right)^2
\end{align*}
in probability and in expectation. This finishes the proof. 
\end{proof}

As an immediate consequence, we have the following.
\begin{corollary}\label{corollary: joint law of B_i}
    Suppose that Assumptions \ref{assumption A1},\ref{assumption A2}, and \ref{assumption A3} hold. Let $B_i=W_i\otimes \overline{W_i}-\esp(W_i \otimes \overline{W_i})$. Then, for each $d$ fixed, $(B_{i})_{i \in [d]}$ converges weakly in probability and in expectation to $(a_i\otimes a_i)_{i \in [d]}$, where $a_i$ are free variables with distribution $\mu_i$.
\end{corollary}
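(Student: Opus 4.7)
The plan is to show joint convergence of $(B_i)_{i \in [d]}$ by verifying that every mixed moment
\[ \tau^{(n^2)}(B_{i_1} B_{i_2} \cdots B_{i_k}) \underset{n\to\infty}{\longrightarrow} (\tau \otimes \tau)(a_{i_1} \otimes a_{i_1} \cdots a_{i_k} \otimes a_{i_k}) = \tau(a_{i_1} \cdots a_{i_k})^2 \]
converges, for arbitrary $k \geq 1$ and arbitrary indices $i_1, \ldots, i_k \in [d]$, both in probability and in expectation.

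First, I would replicate the H\"older-trace reduction already carried out in the proof of Lemma \ref{lemma: limit distribution of B_i}. Expanding each factor via $B_{i_j} = W_{i_j} \otimes \overline{W}_{i_j} - M_{i_j}$, with $M_{i_j} := \esp(W_{i_j} \otimes \overline{W}_{i_j})$, produces a sum over subsets $S \subseteq [k]$ of products $\prod_j N_j$ in which $N_j = M_{i_j}$ when $j \in S$ and $N_j = W_{i_j} \otimes \overline{W}_{i_j}$ otherwise. For any term with $S \neq \emptyset$, inequality \eqref{ineq: trace inequality} applied in $M_{n^2}(\complex)$ bounds $|\tau^{(n^2)}(\prod_j N_j)|$ by a product of $L^k$-type trace norms, one of which is $\tau^{(n^2)}(|M_{i_j}|^k)^{1/k}$ and tends to $0$ by Assumption \ref{assumption A3}. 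The remaining factors equal $(\tau^{(n^2)}|W_{i_l} \otimes \overline{W}_{i_l}|^k)^{1/k} = (\tau^{(n)}|W_{i_l}|^k)^{2/k}$ and stay bounded in probability and in expectation by Assumption \ref{assumption A1}. Hence the only surviving term in the limit is $\tau^{(n^2)}(W_{i_1} \otimes \overline{W}_{i_1} \cdots W_{i_k} \otimes \overline{W}_{i_k})$.

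Next I would exploit the tensor-product structure. Since $\tau^{(n)} \otimes \tau^{(n)}$ is multiplicative on simple tensors,
\[ \tau^{(n^2)}\left( \prod_{j=1}^k W_{i_j} \otimes \overline{W}_{i_j} \right) = \tau^{(n)}(W_{i_1} \cdots W_{i_k}) \cdot \tau^{(n)}(\overline{W}_{i_1} \cdots \overline{W}_{i_k}) = \left|\tau^{(n)}(W_{i_1} \cdots W_{i_k})\right|^2, \]
using that entry-wise conjugation commutes with matrix products and that $\tau^{(n)}(\overline{M}) = \overline{\tau^{(n)}(M)}$. Assumptions \ref{assumption A1} and \ref{assumption A2} together precisely say that $(W_i)_{i \in [d]}$ converges in joint non-commutative distribution to a free family $(a_i)_{i \in [d]}$ with marginals $\mu_i$, so $\tau^{(n)}(W_{i_1} \cdots W_{i_k}) \longrightarrow \tau(a_{i_1} \cdots a_{i_k})$ both in probability and in expectation.

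To close the argument, I would observe that $\tau(a_{i_1} \cdots a_{i_k}) \in \real$ for self-adjoint free elements with real-valued marginals. Indeed, the free moment-cumulant expansion writes this moment as a sum over non-crossing partitions of $[k]$; by freeness only those whose blocks are monochromatic in the coloring $j \mapsto i_j$ contribute, and each surviving term is a product of free cumulants $\kappa_n(\mu_i)$, which are real. Consequently $|\tau(a_{i_1} \cdots a_{i_k})|^2 = \tau(a_{i_1} \cdots a_{i_k})^2$, matching the target $(\tau \otimes \tau)(a_{i_1} \otimes a_{i_1} \cdots a_{i_k} \otimes a_{i_k})$. The only delicate point is upgrading the H\"older reduction in step one to a bound uniform enough to give convergence in expectation; this is handled by combining inequality \eqref{ineq: trace inequality} with Cauchy--Schwarz on the outer expectation, using the finite moment control inherited from Assumption \ref{assumption A1}.
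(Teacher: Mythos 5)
Your argument is correct and follows essentially the same route as the paper: the paper's proof simply invokes Assumption \ref{assumption A2} to pass to a joint limit given by a free family and then says "the result follows by Lemma \ref{lemma: limit distribution of B_i}", and what you have written is precisely the extension of that lemma's H\"older-plus-tensor-multiplicativity argument to mixed moments. Your extra care with $|\tau^{(n)}(W_{i_1}\cdots W_{i_k})|^2$ versus $\tau(a_{i_1}\cdots a_{i_k})^2$ (reality of the limiting trace) is a detail the paper leaves implicit, and it is handled correctly.
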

\begin{proof}
    By Assumption \ref{assumption A2}, we can assume $(W_i)_{i \in [d]}$ converges weakly in probability and in expectation to $(a_i)_{i \in [d]}$, where $a_i$ are free. The result follows by Lemma \ref{lemma: limit distribution of B_i}.
\end{proof}

We are ready to prove Theorem \ref{main theorem: part 1}.
\begin{proof}[Proof of Theorem \ref{main theorem: part 1}]
    By Corollary \ref{corollary: joint law of B_i}, we have
    \begin{align*}
        \sum_{i \in [d]}B_i \underset{n \to \infty}{\Rightarrow} \sum_{i \in [d]}a_i\otimes a_i,
    \end{align*}
    where $a_i$ are free with distribution $\mu_i$. To conclude, it suffices to note that $\tilde\mu_i$ is the distribution of $a_i\otimes a_i/\sqrt{d}$.
\end{proof}
Before proving Theorem \ref{main theorem: part 2}, let us give a sufficient condition for Assumption \ref{assumption A3} to hold.
\begin{lemma}\label{lemma: sufficient condition to A3}
    Let $W_1,\ldots,W_d$ satisfying Assumption \ref{assumption A1} and assume that
    \begin{align}\label{limit: centering}
        \esp(\tau(W_i)) \underset{n \to \infty}{\to}0,
    \end{align}
    for all $i \in [d]$. Let $(W_i^{(l)})_{l \in [L]}$ be independent copies of $W_i$. If for each $i \in [d]$ and $L \ge 1$, $(W_i^{(l)})_{l \in [L]}$ are asymptotically free in probability and in expectation, then Assumption \ref{assumption A3} holds.
\end{lemma}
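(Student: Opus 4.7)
The plan is to establish Assumption~\ref{assumption A3} by showing that, for every fixed $p \geq 1$, the $p$-th normalized moment $\tau^{(n^2)}((\esp[W_i \otimes \overline{W_i}])^p)$ tends to $0$; since $\esp[W_i \otimes \overline{W_i}]$ is Hermitian, this is equivalent to the weak convergence of its spectral distribution to $\delta_0$. Let $M_1,\ldots,M_p$ be independent copies of $W_i$. The tensor identity $(A \otimes B)(C \otimes D) = AC \otimes BD$ combined with independence gives $(\esp[W_i \otimes \overline{W_i}])^p = \esp[(M_1 \cdots M_p) \otimes \overline{M_1 \cdots M_p}]$, and combining this with $\tau^{(n^2)}(A \otimes B) = \tau^{(n)}(A)\tau^{(n)}(B)$ yields
\[
\tau^{(n^2)}\bigl((\esp[W_i \otimes \overline{W_i}])^p\bigr) = \esp\bigl[|\tau^{(n)}(M_1 \cdots M_p)|^2\bigr].
\]

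Next, I would show that $\tau^{(n)}(M_1 \cdots M_p) \to 0$ in probability. Assumption~\ref{assumption A1} combined with the centering hypothesis $\esp(\tau(W_i)) \to 0$ forces $\mu_i$ to have mean zero, so $\tau^{(n)}(M_l) \to 0$ in probability for each $l$. Asymptotic freeness in probability of the $p$ copies gives
\[
\tau^{(n)}\Bigl(\prod_{l=1}^p (M_l - \tau^{(n)}(M_l))\Bigr) \underset{n\to\infty}{\longrightarrow} 0 \quad \text{in probability},
\]
and expanding this product into a finite sum, each non-trivial term is a product of factors $\tau^{(n)}(M_l)$ (each $\to 0$ in probability) times the trace of a centered product over a strict subset of $[p]$, which also $\to 0$ in probability by asymptotic freeness applied to that subset. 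Hence $\tau^{(n)}(M_1 \cdots M_p) \to 0$, and consequently $|\tau^{(n)}(M_1 \cdots M_p)|^2 \to 0$, in probability.

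The hardest part is then upgrading this to convergence in expectation, which I would do via uniform integrability. From the trace inequality $|\tau^{(n)}(X)| \le \tau^{(n)}(|X|)$ and Jensen's inequality applied to the state $\tau^{(n)}$, one has $|\tau^{(n)}(X)|^{2k} \le \tau^{(n)}((X^*X)^k)$ for every integer $k \ge 1$. Applied to $X = M_1 \cdots M_p$ with each $M_l$ Hermitian this reads $|\tau^{(n)}(M_1 \cdots M_p)|^{2k} \le \tau^{(n)}((M_p \cdots M_1 M_1 \cdots M_p)^k)$. Taking expectations, asymptotic freeness of the $p$ copies in expectation, together with the finiteness of all moments of $\mu_i$ (a consequence of \ref{assumption A1}), yields
\[
\esp\bigl[|\tau^{(n)}(M_1 \cdots M_p)|^{2k}\bigr] \le \esp\bigl[\tau^{(n)}((M_p \cdots M_1 M_1 \cdots M_p)^k)\bigr] \underset{n\to\infty}{\longrightarrow} \tau\bigl((a_p \cdots a_1 a_1 \cdots a_p)^k\bigr) < \infty,
\]
where $(a_l)_{l\in[p]}$ are free copies of $\mu_i$. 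Thus $\{|\tau^{(n)}(M_1 \cdots M_p)|^2\}_n$ is bounded in every $L^k$ and hence uniformly integrable; combined with convergence in probability to $0$, this produces $\esp[|\tau^{(n)}(M_1 \cdots M_p)|^2] \to 0$, as required. The crux of the argument is the uniform $L^k$ bound obtained from the trace inequality: it provides the bridge between the probability-mode freeness statement (which only gives convergence in probability) and the expectation-mode conclusion demanded by \ref{assumption A3}.
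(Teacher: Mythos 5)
Your proof follows essentially the same route as the paper's: the same identity $\tau^{(n^2)}\bigl((\esp[W_i\otimes\overline{W_i}])^p\bigr)=\esp\bigl[|\tau^{(n)}(W_i^{(1)}\cdots W_i^{(p)})|^2\bigr]$ for independent copies, followed by asymptotic freeness of the copies together with the centering of $\mu_i$ to kill the limit. The only difference is that you spell out the passage from convergence in probability of $\tau^{(n)}(M_1\cdots M_p)$ to convergence of the expectation of its square via an $L^k$/uniform-integrability bound, a step the paper's proof leaves implicit; this is a correct and welcome addition rather than a different argument.
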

\begin{proof}
Fix $i \in [d]$. Then, we can easily see that
\begin{align*}
   \frac{1}{n^2}\tr\left(\left(\esp \left(W_i\otimes \overline{W_i}\right)\right)^p\right)=\frac{1}{n^2}\esp\left(\tr^2\left(W_i^{(1)}\cdots W_i^{(p)}\right)\right).
\end{align*}
Since $(W_i^{(l)})_{l \in [p]}$ are asymptotically free in probability and in expectation, we get that
\begin{align*}
    \frac{1}{n^2}\tr\left(\left(\esp \left(W_i\otimes \overline{W_i}\right)\right)^p\right) \underset{n \to \infty}{\to} \tau^2(a_i^{(1)}\cdots a_i^{(p)}),
\end{align*}
where $(a_i^{(l)})_{l \in [p]}$ are free i.i.d with distribution $\mu_i$. The centered Assumption \eqref{limit: centering} implies that $\mu_i$ is centered, hence $\tau(a_i^{(1)}\cdots a_i^{(p)})=0$ by freeness and the result follows.
\end{proof}

\subsection{The asymptotic free Central Limit Theorem}\label{subsec: asymptotic free central limit theorem}
In order to prove Theorem \ref{main theorem: part 2}, we begin with a simple lemma.
\begin{lemma}\label{lemma: centering assumption}
Suppose that Assumption \ref{assumption A4} holds. For each $i\in[d]$, set $B_i:=W_i \otimes \overline{W}_i-\esp (W_i \otimes \overline{W}_i)$. Let $(i_1,\ldots,i_p)\in [d]^p$ and suppose that there exists $k\in[p]$ such that, for all $l\in[p]$ with $l\neq k$, $i_l\neq i_k$. Then, we have
\begin{align*}
    \esp \left(\tr(B_{i_1}\cdots B_{i_p})\right)=0.
\end{align*}
\end{lemma}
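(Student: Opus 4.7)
The plan is to exploit only two facts: the $B_i$'s are centered by construction, namely $\esp(B_i)=0$ for every $i$, and Assumption \ref{assumption A4} makes the family $(B_i)_{i\in[d]}$ mutually independent, since each $B_i$ is a deterministic function of $W_i$ alone. Combined with the cyclic property of the trace, this should yield the result almost immediately; I do not anticipate any genuine obstacle, only bookkeeping.

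First I would rotate the word using cyclicity of the trace so that the single-occurrence factor sits at the front:
\[ \tr(B_{i_1}\cdots B_{i_p}) = \tr(B_{i_k} M), \qquad M := B_{i_{k+1}}\cdots B_{i_p}B_{i_1}\cdots B_{i_{k-1}}. \]
By the hypothesis that $i_l\neq i_k$ for every $l\neq k$, the matrix $M$ only involves the $W_{i_l}$'s with $l\neq k$, and is therefore independent of $B_{i_k}$.

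The final step is to push the expectation inside the trace. Expanding $\tr(B_{i_k}M)=\sum_{a,b}(B_{i_k})_{ab}M_{ba}$ and using independence entrywise, the factorization $\esp[(B_{i_k})_{ab}M_{ba}] = \esp[(B_{i_k})_{ab}]\,\esp[M_{ba}]$ yields
\[ \esp\left(\tr(B_{i_k}M)\right) = \tr\left(\esp(B_{i_k})\,\esp(M)\right) = 0, \]
since $\esp(B_{i_k})=W_i\otimes\overline{W}_i$-centered gives $\esp(B_{i_k})=0$. The only care needed is to justify the factorization of the expectation across the matrix product, which is routine once written out entrywise as above.
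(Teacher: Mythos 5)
Your argument is correct and is essentially the paper's own proof: the paper reduces without loss of generality to $k=1$ (which is exactly your cyclic rotation of the trace), then uses independence of $B_{i_1}$ from the product of the remaining factors under Assumption \ref{assumption A4} and the entrywise factorization $\esp\left(\tr(B_{i_1}B)\right)=\sum_{k_1,k_2}\esp\left(B_{i_1}(k_1,k_2)\right)\esp\left(B(k_2,k_1)\right)$, concluding from $\esp(B_{i_1})=0$. No gaps; nothing further is needed.
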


\begin{proof}
We can assume without loss of generality that $k=1$, i.e.~$ i_1\notin \{i_2,\ldots,i_p\}$. In this case, notice that $B_{i_1}$ is independent of $B:=B_{i_2}\cdots B_{i_p}$ by Assumption \ref{assumption A4}, hence
\begin{align*}
    \esp \left(\tr(B_{i_1}B)\right)=\sum_{k_1,k_2 \in [n]^2} \esp \left(B_{i_1}(k_1,k_2)\right)\esp \left(B(k_2,k_1)\right).
\end{align*}
Since $B_{i_1}$ is centered, we have $\esp (B_{i_1}(k_1,k_2))=0$ and the result follows.
\end{proof}

Although Lemma \ref{lemma: centering assumption} is a simple consequence of independence and centering, it is a powerful property that will allow us to obtain a limit distribution for $\Delta$. 
We are now ready to prove Theorem \ref{main theorem: part 2}.
\begin{proof}[Proof of Theorem \ref{main theorem: part 2}]
We begin by writing $\Delta$ as 
\begin{align*}
    \Delta=\frac{1}{\sqrt{d}}\sum_{i \in [d]}B_i,
\end{align*}
where $B_i=W_i \otimes \overline{W}_i-\esp (W_i \otimes \overline{W}_i)$. We recall that, given $X\in M_{n^2}(\complex)$ a random matrix, we have
\begin{align*}
    \esp \left(\tau^{(n^2)}(X)\right)=\esp\left(\frac{1}{n^2}\tr(X)\right).
\end{align*}
Then, we can compute the moments of $\Delta$ by
\begin{align*}
\esp \left(\tau^{(n^2)}(\Delta^p)\right)=d^{-p/2}\sum_{i \in [d]^p}\esp\left(\tau^{(n^2)}\left(B_{i_1}\cdots B_{i_p}\right)\right).
\end{align*}
For every $i \in [d]^p$, we associate a partition $\pi(i)$ of $[p]$ defined by placing any  $k,\ell \in [p]$ in the same block of $\pi(i)$ whenever $i_k=i_l$. We denote by $P(p)$ the set of all partitions of $[p]$. 
Notice that for every $i \in [d]^p$, $\esp\left(\tau^{(n^2)}(B_{i_1}\cdots B_{i_p})\right)$ only depends on the partition $\pi(i)$, since $B_i$ are identically distributed by Assumption \ref{assumption A4}. In particular, let $\tau^{(n^2)}(\pi)$ be the common value of $\esp\left(\tau^{(n^2)}(B_{i_1}\cdots B_{i_p})\right)$ for $\pi(i)=\pi$. Hence,
\begin{align*}
\esp\left( \tau^{(n^2)}(\Delta^p)\right)=d^{-p/2}\sum_{\pi \in P(p)}\tau^{(n^2)}(\pi)\left|\{i \in [d]^p:\pi(i)=\pi\}\right|.
\end{align*}
The cardinality can be computed by simply choosing an index for each block of $\pi$, namely,
\begin{align*}
\left|\{i \in [d]^p:\pi(i)=\pi\}\right|=d(d-1)\cdots (d-|\pi|+1)=d^{|\pi|}\left(1+O_p\left(\frac{1}{d}\right)\right),
\end{align*}
where $f=O_p(1/d)$ means that $f \le C_p/d$, for some constant $C_p$ that depends only on $p$. We then have
\begin{align*}
\esp\left( \tau^{(n^2)}(\Delta^p)\right)=\sum_{\pi \in P(p)}\tau^{(n^2)}(\pi)d^{|\pi|-p/2}\left(1+O_p\left(\frac{1}{d}\right)\right).
\end{align*}
By Lemma \ref{lemma: centering assumption}, if $\pi \in P(p)$ has a block $V$ of size $1$, then we would have $\tau^{(n^2)}(\pi)=0$. Hence, we can restrict to partitions without single blocks, yielding
\begin{align*}
\esp \left(\tau^{(n^2)}(\Delta^p)\right)=\sum_{\substack{\pi \in P(p)\\\forall V \in \pi,\,|V| \ge 2}}\tau^{(n^2)}(\pi)d^{|\pi|-p/2}\left(1+O_p\left(\frac{1}{d}\right)\right).
\end{align*}
Using Assumption \ref{assumption A1} and Holder's Inequality, we have that 
\begin{align}\label{ineq: uniform integrability}
\tau^{(n^2)}(\pi) \le \max_{i \in [d]}\left[\esp\left(\tau^{(n^2)}(|B_i|^p)\right)\right]^{1/p}\leq C_p,
\end{align}
where $C_p<\infty$ is a constant independent of $n$.
Hence, whenever $\pi$ has a block of size $|V| \ge 3$, we have $|\pi| <p/2$ and
\begin{align*}
\tau^{(n^2)}(\pi)d^{|\pi|-p/2}=O_p\left(\frac{1}{\sqrt{d}}\right).
\end{align*}
In particular, the only partitions that contribute to the dominating term are the pair partitions; that is, every block has cardinality two. Let $P_2(p)$ be the set of all pair partitions, then
\begin{align*}
\esp \left(\tau^{(n^2)}(\Delta^p)\right)=\sum_{\pi \in P_2(p)}\tau^{(n^2)}(\pi) +O_p\left(\frac{1}{\sqrt{d}}\right).
\end{align*}
Note that $\tau^{(n^2)}(\pi)$ depends only on $B_1,\ldots,B_p$. Corollary \ref{corollary: joint law of B_i} implies that $(B_i)_{i \in [d]}$ converges to $(a_i\otimes a_i)_{i \in [d]}$ in probability and in expectation, where $a_i$ are free variables and the limit of $W_i$. Hence,
\begin{align*}
\tau^{(n^2)}(\pi) \underset{n\to\infty}{\to} \tau\otimes \tau(\pi):=\tau^2(a_{i_1}\cdots a_{i_p}),
\end{align*}
In particular, freeness implies that $\tau\otimes \tau(\pi)=0$ if and only if $\pi \in P_2(p) \setminus NC_2(p)$, where $NC_2(p)$ is the set of noncrossing pair partitions. If $\pi \in NC_2(p)$, then $\tau\otimes \tau(\pi)=1$ because the matrices are normalized. We deduce that
\begin{align*}
\esp \left(\tau^{(n^2)}(\Delta^p)\right)=|NC_2(p)|+o_p(1)+O_p\left(\frac{1}{\sqrt{d}}\right), 
\end{align*}
where $f=o_p(1)$ denotes a function that depends on $p$ and $f \to 0$ as $n \to \infty$. This proves the convergence in expectation. To get convergence in probability, set
\begin{align*}
    \tau^{(n^2)}(j)=\tau^{(n^2)}(B_{j_1}\cdots B_{j_p}),
\end{align*}
for each $j \in [d]^p$. Then,
\begin{align*}
    \var\left(\tau^{(n^2)}(\Delta^p)\right)=\frac{1}{d^p}\sum_{i,j \in [d]^p}\left\{\esp \left( \tau^{(n^2)}(i) \tau^{(n^2)}(j)\right)-\esp\left( \tau^{(n^2)}(i)\right)\esp\left( \tau^{(n^2)}(j)\right)\right\}.
\end{align*}
Now, each index $i_l,j_l$ must appear at least twice in $i \cup j$, as the matrices are centered and independent. In particular, the summation has at most $d^p$ indices. We thus have
\begin{align*}
    \var\left(\tau^{(n^2)}(\Delta^p)\right) \le \max_{i,j \in [d]^p}\left\{\esp \left( \tau^{(n^2)}(i) \tau^{(n^2)}(j)\right)-\esp\left( \tau^{(n^2)}(i)\right)\esp\left( \tau^{(n^2)}(j)\right)\right\}.
\end{align*}
Assumption \ref{assumption A1} implies that $(B_i)_{i \in [p]}$ are uniformly integrable, then the Dominated Convergence Theorem \cite[Theorem 1.5.8]{durrett2019probability} and Corollary \ref{corollary: joint law of B_i} imply that
\begin{align*}
     \esp\left(\tau^{(n^2)}(i)\tau^{(n^2)}(j)\right) \underset{n\to\infty}{\to} \tau \otimes \tau(\pi(i))\tau \otimes \tau(\pi(j)),
\end{align*}
The same limit holds for $\esp\left( \tau^{(n^2)}(i)\right)\esp\left( \tau^{(n^2)}(j)\right)$, and therefore
\begin{align*}
    \var\left(\tau^{(n^2)}(\Delta^p)\right)=o_p(1),
\end{align*}
from which convergence in probability follows.
\end{proof}

\begin{remark}
As we saw in the proof, the only assumption for the existence of a limit of $\Delta$ is the centering assumption in Lemma \ref{lemma: centering assumption}. In particular, the proof of Theorem \ref{main theorem: part 2} works verbatim for noncommutative variables. If $a_1,\ldots, a_d$ are centered exchangeable variables such that Lemma \ref{lemma: centering assumption} holds, that is, $\tau(a_{i_1}\cdots a_{i_p})=0$ whenever there exists an index different than the others, we have
\begin{align*}
    \lim_{d \to \infty}\tau\left(\frac{1}{\sqrt{d}}\sum_{i \in [d]}a_i\right)^p=\sum_{\pi \in P_2(p)}\tau(\pi).
\end{align*}
If in addition $(a_i)_{i \in [d]}=(a_i^{(n)})_{i \in [d]}$ depends on some parameter $n$ and are asymptotically free, as $n$ grows, we get an asymptotic free Central Limit Theorem
\begin{align*}
    \lim_{n,d \to \infty}\tau^{(n)}\left(\frac{1}{\sqrt{d}}\sum_{i \in [d]}a_i^{(n)}\right)^p=|NC_2(p)|.
\end{align*}
\end{remark}

\section{Examples}\label{sec: examples}

We begin with an almost surely unital and trace-preserving Hermitian quantum channel.
\begin{example}[Rademacher diagonal matrices] \label{ex:Rademacher}
    Let $R=\mathrm{diag}(\varepsilon_{k})_{k \in [n]}\in M_n(\complex)$ be a Rademacher diagonal matrix, i.e.~$\varepsilon_{1},\ldots,\varepsilon_n$ are independent Rademacher (or symmetric Bernoulli) random variables with parameter $1/2$. Let $U_1,\ldots, U_d\in M_n(\complex)$ be independent Haar unitary matrices independent of $R$ as well and set $W_i=U_iRU_i^*$ for each $i\in[d]$. It follows from \cite{voiculescu1991limit} that $(W_i)_{i\in[d]}$ are asymptotically free and Lemma \ref{lemma: sufficient condition to A3} yields Assumption \ref{assumption A3}. Theorem \ref{main theorem: part 1} thus implies that 
    \begin{align*}
        \Delta \underset{n\to\infty}{\Rightarrow} \frac{1}{\sqrt{d}}\sum_{i \in [d]}r_i\otimes r_i,
    \end{align*}
    in probability and in expectation, where $(r_i)_{i\in[d]}$ are free with Rademacher distribution.  In particular, for a polynomial
    \begin{align*}
        p(x):=b_0+\sum_{l\in [m]}b_lx^l,
    \end{align*}
    we immediately compute
    \begin{align*}
        &p(r_i)-\tau(p(r_i))=\left(\sum_{l \in [\lceil m/2\rceil ]}b_{2l-1}\right)r_i;\\
        &p(r_i\otimes r_i)-\tau\otimes \tau(p(r_i\otimes r_i))=\left(\sum_{l \in [\lceil m/2\rceil]}b_{2l-1}\right)r_i\otimes r_i.
    \end{align*}
    Since $\tau(r_i)=0$, we can readily see that the freeness of $(r_i)_{i\in[d]}$ is equivalent to the freeness of $(r_i\otimes r_i)_{i\in[d]}$ in this case. Hence $(r_i\otimes r_i)_{i \in [d]}\overset{d}{=}(r_i)_{i \in [d]}$ where the equality holds in distribution. If we denote the limit of $\Delta$ by $z$, we then have
    \begin{align*}
        z\overset{d}{=}\frac{1}{\sqrt{d}}\sum_{i \in [d]}r_i.
    \end{align*}
    For $d$ even, we can precisely compute this limit. Indeed, \cite[Example 12.8.1]{nica2006lectures} shows that the free convolution $(r_1+r_2)$ has arcsine distribution. Moreover, such a distribution was proved to be the law of $u+u^*$ in \cite[Example 1.14]{nica2006lectures}, where $u$ is a Haar unitary. We then deduce that
    \begin{align*}
        z\overset{d}{=}\frac{1}{\sqrt{d}}\sum_{i \in [d/2]}(u_i+u_i^*).
    \end{align*}
    The distribution of the free sum of $u_i+u_i^*$ is called the Kesten-McKay distribution with parameter $d$; see \cite{kesten1959symmetric},\cite[Exercise 12.21]{nica2006lectures}. Its density is given by
    \begin{align*}
        f_{KM(d)}(x)=\frac{1}{2\pi}\frac{d}{d^2-x^2}\sqrt{4(d-1)-x^2}\,\mathbf{1}_{|x| \le 2\sqrt{d-1}}.
    \end{align*}
    It is also the limit spectral distribution of random $d$-regular graphs \cite{mckay1981expected}. Therefore, the density of $b$ is given by
    \begin{align*}
        \tilde f_{KM(d)}(x)=\frac{1}{2\pi}\frac{d}{d-x^2}\sqrt{4\left(1-\frac{1}{d}\right)-x^2}\,\mathbf{1}_{|x| \le 2\sqrt{1-\frac{1}{d}}}.
    \end{align*}
    A direct computation shows that, for all $x \in (-2,2)$, $\tilde f_{KM(d)}(x) \to f_{sc}(x)$ as $d \to \infty$. This is a local version of Theorem \ref{main theorem: part 2}.
\end{example}

 Let us interpret Example \ref{ex:Rademacher} above in terms of the corresponding random quantum channel. Given $W=URU^*\in M_n(\complex)$ a uniformly rotated Rademacher matrix, we have $W^2=R^2=\Id$. Hence, sampling $W_1,\ldots,W_d\in M_n(\complex)$ independently and uniformly rotated Rademacher matrices and setting $K_i=W_i/\sqrt{d}$ for each $i\in[d]$, the random completely positive map 
\[ \Phi:X\in M_n(\complex)\mapsto \sum_{i\in[d]} K_iXK_i\in M_n(\complex) \]
is exactly trace-preserving and unital (not just on average), i.e.~it is a random unital quantum channel. More precisely, it is a mixture of unitary conjugations. What is more, by what precedes, the spectral distribution of $\sqrt{d}(\Phi-\esp(\Phi))$ converges weakly almost surely to $\tilde\mu_{KM(d)}$ (for $d$ even) as $n\to\infty$. And the spectral distribution of $\sqrt{d}(\Phi-\esp(\Phi))$ converges weakly in probability and in expectation to $\mu_{sc}$ as $n,d\to\infty$.

\begin{remark}
    There is a straightforward generalization of the previous example to deterministic matrices. Let $M_1,\ldots, M_d\in M_n(\complex)$ be deterministic Hermitian matrices such that, for each $i\in[d]$, $M_i \Rightarrow \mu_i$ as $n\to\infty$ and $\mu$ is centered. A natural way to make them asymptotically free is by conjugating them with independent Haar unitaries. In particular, if we set, for each $i\in[d]$, $W_i=U_iM_iU_i^*$, where $U_1,\ldots,U_d\in M_n(\complex)$ are independent uniformly chosen unitaries, then $(W_i)_{i\in[d]}$ satisfies Assumptions \ref{assumption A1},\ref{assumption A2},\ref{assumption A3} and \ref{assumption A4}. Indeed, for instance, Assumption \ref{assumption A3} follows by Lemma \ref{lemma: sufficient condition to A3}. So that we can apply Theorem \ref{theorem: main theorem} to this kind of matrices.
\end{remark}

\begin{example}[Wigner matrices] \label{ex:Wigner}
A random Hermitian matrix $W=(W_{kl})_{k,l\in[n]} \in M_n(\complex)$ is a Wigner matrix if $W_{kl}$ are i.i.d.~centered with variance $1/n$ for $k,l\in[n]$ such that $k \leq l$ and $W_{kl}=\overline{W_{lk}}$ for $k,l\in[n]$ such that $l<k$. It is well-known \cite{anderson2010introduction,dykema1993certain,voiculescu1991limit} that independent Wigner matrices $W_1,\ldots,W_d$ satisfy Assumptions \ref{assumption A1}, \ref{assumption A2}, \ref{assumption A3} (by Lemma \ref{lemma: sufficient condition to A3}) and \ref{assumption A4}. Their joint law $(W_i)_{i \in [d]}$ converges weakly in probability to a free semicircular family $(s_i)_{i \in [d]}$. Therefore, Theorem \ref{theorem: main theorem} holds and the limit in Theorem \ref{main theorem: part 1} is expressed as
\begin{align*}
    \Delta \underset{n\to\infty}{\Rightarrow} \frac{1}{\sqrt{d}}\sum_{i \in [d]}s_i \otimes s_i,
\end{align*}
in probability and in expectation.
\end{example}

Let us interpret Example \ref{ex:Wigner} above in terms of the corresponding random quantum channel. Given $W\in M_n(\complex)$ a Wigner matrix, we have $\esp (W_{kl})=0$ and $\esp|W_{kl}|^2=1/n$ for all $k,l\in[n]$ and $(W_{kl})_{k\leq l\in[n]}$ independent. Therefore,
\[ \esp\left(W^2\right) = \sum_{k\in[n]} E_{kk} = \Id. \]
So we can conclude that, sampling $W_1,\ldots,W_d\in M_n(\complex)$ independent Wigner matrices and setting $K_i=W_i/\sqrt{d}$ for each $i\in[d]$, the random completely positive map 
\[ \Phi:X\in M_n(\complex)\mapsto \sum_{i\in[d]} K_iXK_i\in M_n(\complex) \]
is, on average, trace-preserving and unital. Moreover, the spectral distribution of $d(\Phi-\esp(\Phi))$ converges weakly almost surely to $\mu_{sc}^{\circledast d}$ as $n\to\infty$, and the spectral distribution of $\sqrt{d}(\Phi-\esp(\Phi))$ converges weakly in probability and in expectation to $\mu_{sc}$ as $n,d\to\infty$.
In this case, we can actually explicitly compute $\esp(\Phi)$. Indeed, denoting by $\{e_1,\ldots,e_n\}$ the canonical basis of $\complex^n$ and setting $E_{kl}=e_ke_l^*$ for each $k,l\in[n]$, we have
\[ \esp \left(W\otimes\overline{W}\right) = \frac{1}{n}\sum_{k,l\in[n]} E_{kl}\otimes E_{kl} + \frac{1}{n}\sum_{k\neq l\in[n]} E_{kl}\otimes E_{lk} = \psi\psi^* + \frac{1}{n}\left(F-\diag(F)\right), \]
where $\psi=\sum_{k\in[n]}(e_k\otimes e_k)/\sqrt{n}$ is a maximally entangled unit vector, $F=\sum_{k,l\in[n]} E_{kl}\otimes E_{lk}$ is the flip operator and $\diag(F)=\sum_{k\in[n]} E_{kk}\otimes E_{kk}$ is its diagonal part (with respect to the canonical product basis of $\complex^n\otimes\complex^n$). $F-\diag(F)$ has spectrum $\{1,-1,0\}$, where $1$ and $-1$ have multiplicities $n(n-1)/2$ (with associated eigenvectors $\{(e_k\otimes e_l+e_l\otimes e_k)/\sqrt{2},\ k<l\in[n]\}$ and  $\{(e_k\otimes e_l-e_l\otimes e_k)/\sqrt{2},\ k<l\in[n]\}$ respectively) and $0$ has multiplicity $n$. 
Hence,
\[ \esp(M_\Phi)= \psi\psi^* + \frac{1}{n}\left(F-\diag(F)\right). \]
This can be re-written at the level of $\Phi$ as
\[ \esp(\Phi):X\in M_n(\complex)\mapsto \tr(X)\frac{\Id}{n}+\frac{1}{n}\left(X^T-\diag(X)\right)\in M_n(\complex), \]
where $X^T$ denotes the transposition of $X$ and $\diag(X)$ its diagonal part, both with respect to the canonical basis of $\complex^n$. This means that, up to a correction that vanishes as $n$ grows, $\esp(\Phi)$ is the so-called fully randomizing channel $\Pi:X\mapsto\tr(X)\Id/n$, which has one eigenvalue equal to $1$, with associated eigenvector the maximally mixed state $\Id/n$, and all the other eigenvalues equal to $0$. Our result thus gives a precise understanding of how the asymptotic spectrum of $\Phi$ deviates from the flat one of $\Pi$, for $d$ either fixed or growing.

We can also consider non-homogeneous matrices as in \cite{lancien2023note,bandeira2023matrix}.
\begin{example}[Non-homogeneous matrices]
    Let $W_1,\ldots,W_d\in M_n(\complex)$ be independent Gaussian matrices such that, for each $i\in[d]$,
    \begin{align*}
        &\norm{\esp(W_i)} \underset{n \to \infty}{\to} 0;\\
        &\norm{\esp(W_i^2)-\text{Id}} \underset{n \to \infty}{\to} 0.
    \end{align*}
    Then it was proven in \cite[Theorem 2.10]{bandeira2023matrix} that $(W_i)_{i\in[d]}$ are asymptotically free in expectation and almost surely as long as the covariance structure of each one goes to $0$ sufficiently fast. Concretely, we have to impose that, for each $i\in[d]$,
    \begin{align*}
        v(W_i):=\norm{\cov(W_i)}=o\left(\log^{-3/2} n\right).
    \end{align*}
    The limit is a free semicircular family $(s_i)_{i \in [d]}$. So Theorem \ref{main theorem: part 1} implies that
    \begin{align*}
        \Delta \underset{n\to\infty}{\Rightarrow} \frac{1}{\sqrt{d}}\sum_{i \in [d]}s_i \otimes s_i,
    \end{align*}
    in probability and in expectation. As a matter of fact, similar results hold for Hermitian random matrices with bounded entries (see \cite[Theorem 3.25]{brailovskaya2022universality}).
\end{example}

Another classical example is Wishart-type matrices.
\begin{example}[Wishart matrices]
    A random Hermitian matrix $W \in M_n(\complex)$ is a (centered) Wishart matrix if 
    \begin{align*}
        W=XX^*-\esp (XX^*)=XX^*-\Id,
    \end{align*}
    where $X \in M_n(\complex)$ has i.i.d.~centered entries with variance $1/n$. Independent Whishart matrices $W_1,\ldots,W_d$ also satisfy Assumptions \ref{assumption A1}, \ref{assumption A2}, \ref{assumption A3} (by Lemma \ref{lemma: sufficient condition to A3}) and \ref{assumption A4};  see \cite{capitaine2007strong}. The limit of a Wishart matrix $XX^*$ is equal to $cc^*$, where $c$ is a circular element, i.e. 
    \begin{align*}
        c=\frac{s+is'}{\sqrt{2}},
    \end{align*}
    with $s,s'$ free semicircular variables, and it is called the quarter circular law or Marchenko-Pastur law. The limit law in \ref{main theorem: part 1} is given by
    \begin{align*}
        \Delta \underset{n\to\infty}{\Rightarrow} \frac{1}{\sqrt{d}}\sum_{j \in [d]}(c_jc_j^*-1) \otimes (c_jc_j^*-1),
    \end{align*}
    in probability and in expectation, where $(c_j)_{j\in[d]}$ are free copies of $c$.
\end{example}

\section{Perspectives}\label{sec: perspectives}
The results presented so far are general, and several natural classes of random matrices satisfy Assumptions \ref{assumption A1},\ref{assumption A2},\ref{assumption A3}, and \ref{assumption A4}. A natural extension would be to remove assumption \ref{assumption A3}. It turns out, however, that such an assumption is necessary in order for the limit in Theorem \ref{main theorem: part 2} to be semicircular. To understand this problem, recall that the tensor convolution $\mu_1 \circledast \mu_2$ is not the free convolution of $\mu_1 \star \mu_1$ with $\mu_2 \star \mu_2$. This is because if $a$ and $b$ are free variables, then their tensor products $a\otimes a$ and $b\otimes b$ are not necessarily free. This was proven in \cite{collins2017freeness}. In particular, it is surprising that the semicircular law can still be recovered for tensor products under only the mean zero assumption. However, a deep understanding of the structure of tensor products is still lacking, and while classical freeness is lost, tensor products still enjoy ``partial'' freeness properties. In a work in progress, we aim to decrypt such a structure in the framework of Free Probability.

A second extension of our results would be the non-Hermitian case, addressing, in particular, the conjecture mentioned in the introduction. Indeed, the methods presented here can be easily generalized to capture the convergence of the $*$-moments of non-Hermitian matrices $W_1,\ldots, W_d\in M_n(\complex)$, that is, moments of the form
\begin{align*}
    \esp\left(\tau^{(n)}(W_{i_1}^{w_1}\cdots W_{i_k}^{w_k})\right),
\end{align*}
where $i_1,\ldots,i_k \in [d]$ and $w_1,\ldots,w_k \in \{1,*\}$. However, the spectral distribution of non-Hermitian matrices is no longer characterized by the $*$-moments convergence or singular value convergence. A second ingredient, which is in general more involved, is to lower bound the smallest singular value of $W_1,\ldots, W_d$ uniformly; see \cite{bordenave2012around} and references within. The first setting we might want to study is when the Kraus operators of the random quantum channel are sampled as independent Haar unitaries. In future work, we aim to establish the quantitative invertibility of the model by deriving lower bounds for the corresponding smallest singular value.

\bibliographystyle{abbrvnat}
\bibliography{references.bib}

\end{document}